% ----------------------------------------------------------------
% Article Class (This is a LaTeX2e document)  ********************
% ----------------------------------------------------------------
\documentclass[conference]{IEEEtran}
\twocolumn
\usepackage[english]{babel}
\usepackage{amsmath,amsthm}
\usepackage{amsfonts}
\usepackage[final,hyperfootnotes=false,hidelinks]{hyperref}
\usepackage[final]{graphicx}
\usepackage{epstopdf}
\usepackage{tikz}
\usepackage{pgfplots}
\usepackage{subfig}
\usepackage{flushend}
\usepackage{cprotect}

\usepackage{tikz}
\usetikzlibrary{arrows,positioning,fit,backgrounds}
\usetikzlibrary{calc}

\interdisplaylinepenalty=2500

\usetikzlibrary{arrows,positioning,fit,backgrounds,shapes}
\usetikzlibrary{calc}
% THEOREMS -------------------------------------------------------
\newtheorem{thm}{Theorem}%[section]

\theoremstyle{definition}
\newtheorem{defn}[thm]{Definition}
\theoremstyle{remark}
\newtheorem{rem}[thm]{Remark}
%\numberwithin{equation}{section}

%\linespread{1.5}
% ----------------------------------------------------------------
\begin{document}

\title{Secret Key Generation with One Communicator and a One-Shot Converse via Hypercontractivity}%
%SECRET KEY GENERATION WITH ONE COMMUNICATOR AND A STRONG CONVERSE VIA HYPERCONTRACTIVITY
\author{\IEEEauthorblockN{Jingbo Liu~~~~~~~~Paul Cuff~~~~~~~~~Sergio Verd\'{u}}
\IEEEauthorblockA{Dept. of Electrical Eng., Princeton University, NJ 08544\\
\{jingbo,cuff,verdu\}@princeton.edu}}%
\maketitle

\begin{abstract}
%[To be considered for an IEEE Jack Keil Wolf ISIT Student Paper Award.]
A new model of multi-party secret key agreement is proposed, in which one terminal called the communicator can transmit public messages to other terminals before all terminals agree on a secret key. A single-letter characterization of the achievable region is derived in the stationary memoryless case. The new model generalizes some other (old and new) models of key agreement. In particular, key generation with an omniscient helper is the special case where the communicator knows all sources, for which we derive a zero-rate one-shot converse for the secret key per bit of communication.
\end{abstract}

\section{Introduction}
A random number known only to several geographically distributed terminals is a resource that can be used for cryptographic purposes such as secure communications. Remarkably, the terminals can usually distill such a shared random number, or secret key, by communicating information about certain correlated random processes they observe individually, even though the communication is wiretapped by some eavesdropper. The fundamental limits on the maximal secret key rate can be studied using information theoretic tools \cite{maurer1993secret}\cite{ahlswede1993common}\cite{csiszar2000common}.

In this paper we propose a new protocol of multi-party secret key agreement, called \emph{secret key generation with one communicator}, as shown in Figure~\ref{f_1com}. Terminals\footnote{Following the convention in \cite{ahlswede1993common}, we denote the terminals by the alphabets of the sources they observe.}
$\mathcal{Z},\mathcal{X}_1,\dots,\mathcal{X}_m$ observe general sources $Z,X_1,\dots,X_m$, respectively. The communicator $\mathcal{Z}$ is allowed to send public messages $W_1,\dots,W_m$ to $\mathcal{X}_1,\dots,\mathcal{X}_m$, before all the $m+1$ terminals agree on an integer $K$ (the key). We assume that for each $l\in\{1,\dots,m\}$ there is an eavesdropper wiretapping the communication link from the communicator to $\mathcal{X}_l$. Independence of $K$ and $W_l$ for each $l\in\{1,\dots,m\}$ ensures security.

We derive a single-letter characterization of the achievable public communication rates and the key rate in the stationary memoryless case, which is a special case of the above formulation where we identify $Z,X_1,\dots,X_m$ with the corresponding block symbols.
%which reveals some distinctive features. For example, the achievable region for a product source can be strictly larger than the Minkowski sum of the achievable regions of the factor sources.
\begin{figure}[h!]
  \centering
\begin{tikzpicture}
[node distance=1cm,minimum height=10mm,minimum width=14mm,arw/.style={->,>=stealth'}]
%  \node[rectangle] (X) [left =of T] {$X^n$};
%  \node[rectangle] (Y) [right =of T] {$Y^n$};
  \node[rectangle,draw,rounded corners] (A) {$\mathcal{X}_1$};
  \node[rectangle,draw,rounded corners] (B) [right=of A] {$\mathcal{X}_2$};
  \node[rectangle] (C) [right =of B] {$\dots$};
  \node[rectangle,draw,rounded corners] (D) [right =of C] {$\mathcal{X}_m$};
  \node[rectangle,draw,rounded corners] (T) [above right=of B, xshift=-13mm, yshift=10mm] {Communicator ($\mathcal{Z}$)};
  \node[rectangle] (K1) [below =0.4cm of A] {$K_1$};
  \node[rectangle] (K2) [below =0.4cm of B] {$K_2$};
  \node[rectangle] (Km) [below =0.4cm of D] {$K_m$};
  \node[rectangle] (K) [above =0.4cm of T] {$K$};

%  \draw [arw] (X) to node[midway,above]{} (A);
%  \draw [arw] (Y) to node[midway,above]{} (B);
%  \draw [arw] (X) to node[midway,above]{} (T);
%  \draw [arw] (Y) to node[midway,above]{} (T);
  \draw [arw] (A) to node[midway,above]{} (K1);
  \draw [arw] (B) to node[midway,above]{} (K2);
  \draw [arw] (D) to node[midway,above]{} (Km);
  \draw [arw] (T) to node[]{} (K);
  \draw [arw,line width=1.5pt] (T) to node[midway,left]{$W_1$} (A.north);
  \draw [arw,line width=1.5pt] (T) to node[midway,left]{$W_2$} (B.north);
  \draw [arw,line width=1.5pt] (T) to node[midway,left]{$W_m$} (D.north);
\end{tikzpicture}
\caption{Key generation with one communicator}
\label{f_1com}
\end{figure}
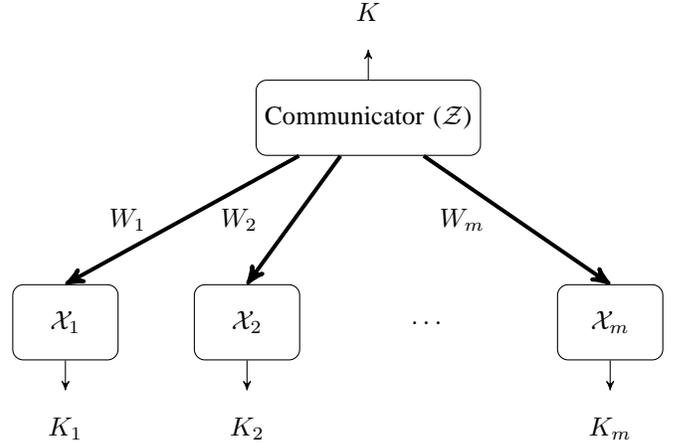

Of course, other related protocols of key generation have been studied in the literature. The canonical one-way protocol (Model~S with forward communication in \cite{ahlswede1993common}) is a special case of the secret key generation with one communicator protocol where $m=1$, in which case a key rate
\begin{align}\label{e1}
R<I({\sf U;X}_1)
\end{align}
is achievable with a communication rate
\begin{align}
R_1>I({\sf U;Z})-I({\sf U;X}_1)
\end{align}
where $Q_{{\sf ZX}_1}$ is the per-letter distribution of the stationary memoryless source and ${\sf U-Z-X}_1$. However the $m=1$ case does not assume the full complexity and difficulty of the general case, as we shall see later in terms of the single letter region and the coding scheme.

If no communication constraints are imposed, then the maximal key rate is (c.f.~\cite{csiszar2004secrecy}):
\begin{align}\label{e_3}
\min_{1\le l\le m} I({\sf Z;X}_l).
\end{align}
While random binning $Z^n$ shows the achievability of \eqref{e_3}, it cannot be used for the rate constraint case, where the receivers do not need to be able the construct $Z^n$; indeed a main difficulty with the rate constraint is to decide what common message should the terminals be able to agree on.

Finally if the terminals only need to construct a common random number without any secrecy guarantee (the CR generation problem), then the rate region is also known \cite[Theorem~4.2]{ahlswede1998common} (see also \cite{prabhakaran2014assisted}): a CR rate of
\begin{align}\label{e_cr1}
R<I({\sf U;Z})
\end{align}
is achievable if
\begin{align}\label{e_cr2}
R_l>I({\sf U;Z})-I({\sf U;X}_l),\quad\forall l=1,\dots,m
\end{align}
where ${\sf U-Z-X}^m$, which can be shown using an extension of source coding with side information \cite{slepian1973noiseless}. In contrast, the key generation problem in this paper requires much more involved achievability construction and analysis. Specifically, we use superposition coding in a novel way in order to convey the information of the key securely to the receivers. Whereas in the common usage of superposition coding the lower layer codeword is decoded before the upper layer \cite{cover1972broadcast}, in our construction the index of the upper layer codeword is transmitted to the receiver to facilitate the decoding of the lower layer codeword. Moreover, we use the recent achievability technique of likelihood encoding \cite{song} in order to simplify the security analysis considerably.

Particularly interesting is the special case of $Z=X^m$, which we call the \emph{omniscient helper} problem. In this case, a zero-rate one-shot converse on the secret key per bit of communication can be derived using hypercontractivity\footnote{Indeed, the new model stems from the first author's attempt to design a key agreement protocol in which the secret key per unit cost has a clean correspondence to hypercontractivity.}, strengthening the best converse bound that can be obtained from Fano's inequality. This new converse, derived from first principles, also underlines the intimate interplay between key agreement and hypercontractivity.

\section{Problem Setup and Main Results}
Let $Q_{ZX_1^m}$ be the joint distribution of sources $Z,X_1,\dots,X_m$.
As in Figure~\ref{f_1com}, the Terminals $\mathcal{Z},\mathcal{X}_1,\dots,\mathcal{X}_m$ observe $Z,X_1,\dots,X_m$, respectively, and the communicator $\mathcal{Z}$ computes the integers $W_1(Z),\dots,W_m(Z)$ possibly stochastically and sends them to $\mathcal{X}_1,\dots,\mathcal{X}_m$, respectively. Then, the $m+1$ parties calculate integers $K(Z),K_1(X_1,W_1),\dots,K_m(X_m,W_m)$ possibly stochastically.

In the case of stationary memoryless sources and block coding, we substitute $Z\leftarrow Z^n$ and $X_l\leftarrow {X_l}^n$ for each $l$, where $n$ is the blocklength.
The measures of reliable communication and secrecy are defined as follows:
\begin{align}
\epsilon_n&=\max_{1\le l\le m}\mathbb{P}[K\neq K_l],\label{e_m1}
\\
\nu_n&=\max_{1\le l\le m}\{\log|\mathcal{K}|-H(K|W_l)\}.\label{e_m2}
\end{align}
\begin{defn}
The $(m+1)$-tuple $(R,R_1,\dots,R_m)$ is said to be \emph{achievable} if a sequence of key generation schemes can be designed to fulfill the following conditions:
\begin{align}
\liminf_{n\to\infty}\frac{1}{n}\log|\mathcal{K}|&\ge R;
\\
\limsup_{n\to\infty}\frac{1}{n}\log|\mathcal{W}_l|&\le R_l,\quad l=1,\dots,m;
\\
\lim_{n\to\infty}\epsilon_n&=0;\\
\lim_{n\to\infty}\nu_n &=0.
\end{align}
\end{defn}
From the standard diagonalization argument~\cite{han_s}, the achievable region is closed. Our main result is the following:

\begin{thm}\label{thm_1}
The set of achievable rates is the closure of
\begin{align}
\bigcup_{Q_{{\sf US}^m|{\sf Z}}}
\left\{
\begin{array}{c}
  (R,R_1,\dots,R_m): \quad R\le \\
\min\{I({\sf U;Z}),I({\sf US}_1;{\sf X}_1),\dots,I({\sf US}_m;{\sf X}_m)\} \\
  R_l\ge I({\sf US}_l;{\sf Z|X}_l),\quad 1\le l\le m
\end{array}\right\}.
\label{reg0_1}
\end{align}
\end{thm}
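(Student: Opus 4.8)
The plan is to prove Theorem~\ref{thm_1} in two parts, achievability and a matching converse, with the achievability carrying essentially all of the difficulty, as the introduction anticipates.

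\emph{Achievability.} Each inequality in \eqref{reg0_1} depends on $Q_{{\sf US}^m|{\sf Z}}$ only through the pairwise marginals $Q_{{\sf US}_l{\sf Z}}$, so I would fix $Q_{{\sf U}|{\sf Z}}$ and channels $Q_{{\sf S}_l|{\sf U}}$ with ${\sf S}_1,\dots,{\sf S}_m$ conditionally independent given ${\sf U}$, together with a rate tuple in the interior of the corresponding region. I would then build a superposition code: a cloud codebook $\{u^n(j)\}$ of rate $I({\sf U;Z})+\gamma$ with i.i.d.\ $Q_{\sf U}$ entries, and for each $(j,l)$ a satellite codebook $\{s_l^n(j,k_l)\}$ of rate $I({\sf S}_l;{\sf Z}|{\sf U})+\gamma$ with entries i.i.d.\ $Q_{{\sf S}_l|{\sf U}}(\cdot\,|\,u^n(j))$. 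On observing $z^n$ the communicator runs a likelihood encoder \cite{song} to choose $(j,k_1,\dots,k_m)$ making $u^n(j)$, and each pair $(u^n(j),s_l^n(j,k_l))$, jointly typical with $z^n$; it then publishes $W_l$, a uniform hash of $(j,k_l)$ of rate $I({\sf US}_l;{\sf Z})-I({\sf US}_l;{\sf X}_l)=I({\sf US}_l;{\sf Z|X}_l)+O(\gamma)$, and declares the key $K$ to be an independent uniform hash of $j$ of rate $R$. This realizes the ``reversed'' superposition flagged in the introduction: terminal~$l$ has no channel output, so $W_l$ is designed to disclose just enough index information for the satellite layer to be recovered first, from $W_l$ and $X_l^n$, after which the leftover cloud randomness --- namely $K$ --- is extracted.

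\emph{Analysis.} Reliability would follow from the covering lemma (the encoder succeeds with high probability), the Markov lemma via ${\sf U-Z-X}_l$ (the chosen codewords stay jointly typical with $X_l^n$), and a packing argument (terminal~$l$ decodes $(j,k_l)$, hence $K$, uniquely in the bin indexed by $W_l$), so $K=K_l$ for all $l$ with high probability. For secrecy, soft covering / resolvability --- the analytic core of the likelihood-encoding method --- brings the joint law of $(z^n,j,k_1,\dots,k_m)$ close in total variation to the idealized law in which these indices are uniform and independent; under the latter a leftover-hash argument makes $K$ nearly uniform and nearly independent of each $W_l$, and total-variation continuity transfers this to the true law. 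The step I expect to be the main obstacle is accounting, under the idealized law, for the residual min-entropy of the cloud index $j$ given $W_l$: it is about $n\,I({\sf U;Z})$ (the cloud-codebook rate), but collapses to about $n\,I({\sf US}_l;{\sf X}_l)$ when the latter is smaller, because then the hash $W_l$ confines $j$ to a set of that size --- precisely what yields the bound $R\le\min\{I({\sf U;Z}),I({\sf US}_1;{\sf X}_1),\dots,I({\sf US}_m;{\sf X}_m)\}$ rather than merely $R\le I({\sf U;Z})$. Then $\gamma\downarrow0$ and the standard diagonalization argument give the closed region.

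\emph{Converse.} Given an achievable tuple, so $\epsilon_n,\nu_n\to0$, set $\delta_n:=(h(\epsilon_n)+\epsilon_n\log|\mathcal{K}|)/n\to0$; Fano's inequality gives $H(K|X_l^n,W_l)\le n\delta_n$, and the secrecy measure gives $\log|\mathcal{K}|-H(K|W_l)\le\nu_n$. With $T$ a uniform time-sharing index, set ${\sf U}=(K,Z^{T-1},T)$, ${\sf S}_l=(W_l,X_l^{T-1})$, ${\sf Z}=Z_T$ and ${\sf X}^m=(X_{1,T},\dots,X_{m,T})$. Since $K,W_1,\dots,W_m$ are generated from $Z^n$, each $X_l^{i-1}$ is a function of the source and channel noise strictly before time $i$, and the time-$i$ symbols of $X_1,\dots,X_m$ come from $Z_i$ via fresh memoryless noise, the chain $({\sf U,S}^m)-{\sf Z}-{\sf X}^m$ holds. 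Then: (i) taking $K$ without loss of generality deterministic in $Z^n$, $nR\le H(K)+o(n)=I(K;Z^n)+o(n)=\sum_iI(K,Z^{i-1};Z_i)+o(n)=n\,I({\sf U;Z})+o(n)$; (ii) $nR\le\log|\mathcal{K}|+o(n)\le I(K;X_l^n,W_l)+o(n)\le I(K;W_l)+\sum_iI(K,Z^{i-1},W_l,X_l^{i-1};X_{l,i})+o(n)=n\,I({\sf US}_l;{\sf X}_l)+o(n)$; (iii) $nR_l\ge\log|\mathcal{W}_l|\ge H(W_l|X_l^n)=I(W_l;Z^n|X_l^n)\ge\sum_iI(K,Z^{i-1},W_l,X_l^{i-1};Z_i|X_{l,i})-n\delta_n=n\,I({\sf US}_l;{\sf Z|X}_l)-o(n)$, where the middle inequality uses $\sum_iH(Z_i|K,Z^{i-1},W_l,X_l^{i-1},X_{l,i})\ge H(Z^n|K,W_l,X_l^n)\ge H(Z^n|W_l,X_l^n)-n\delta_n$ and the memorylessness of the source. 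A support-lemma argument bounds $|\mathcal{U}|$ and $|\mathcal{S}_l|$ if a finite-cardinality region is wanted; dividing by $n$ and letting $n\to\infty$ places $(R,R_1,\dots,R_m)$ in the closure of \eqref{reg0_1}.
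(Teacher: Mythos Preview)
Your achievability plan is close in spirit to the paper's --- superposition plus likelihood encoding plus soft covering --- but the specific construction differs in a way worth noting. You bin $(j,k_l)$ randomly and extract $K$ as an independent random hash of $j$, then invoke a leftover-hash argument whose crux is the min-entropy bookkeeping you flag as the main obstacle. The paper sidesteps that obstacle entirely: after ordering the receivers so that $I({\sf US}_l;{\sf X}_l)$ is nonincreasing in $l$, it partitions the cloud index as $i=(i_0,i_1,\dots,i_m)$ with $\log I_0\approx n\min\{I({\sf U;Z}),\min_l I({\sf US}_l;{\sf X}_l)\}$, declares $K=i_0$, and sends $W_l=(\tilde w_l,i_1,\dots,i_l)$ with a \emph{nested} second component. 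Under the idealized law produced by soft covering, $i_0$ is then exactly independent of each $W_l$ by construction, so no leftover-hash accounting is needed; the min you are trying to recover falls out of the choice of $\log I_0$. Your random-hash route is plausible but strictly harder to make rigorous, and the nested alignment is precisely the structural idea the paper emphasizes.

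The real gap is in your converse. Your auxiliaries ${\sf U}=(K,Z^{T-1},T)$, ${\sf S}_l=(W_l,X_l^{T-1})$ match the paper's deterministic-encoder identification, and steps~(ii) and~(iii) are fine. But the clause ``taking $K$ without loss of generality deterministic in $Z^n$'' is not justified, and the paper explicitly warns that the stochastic case cannot be handled by this identification: the bound $R\le I({\sf U;Z})$ fails because $H(K|Z^n)$ need not be $o(n)$ when the encoder has private randomness $V$ (randomness that is correlated with $W_l$ can be used to mask $K$, so one cannot derandomize by fixing $V=v^*$ without destroying the secrecy guarantee, since $H(K|W_l)\ge \mathbb{E}_V H(K|W_l,V)$ goes the wrong way). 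The paper's fix is to run your converse with source $(Z^n,V)$, obtaining
\[
R\le \min\{I({\sf U;ZV}),\,\min_l I({\sf US}_l;{\sf X}_l)\},\qquad R_l\ge I({\sf US}_l;{\sf ZV}|{\sf X}_l),
\]
and then to argue, as a separate single-letter step, that optimizing this region over $P_{{\sf US}^m{\sf V}|{\sf Z}}$ yields exactly \eqref{reg0_1}. You should either supply that equivalence argument or give an independent reason why stochastic encoders cannot enlarge the region; as written, step~(i) does not go through.
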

\begin{rem}
The region in Theorem~\ref{thm_1} is not decreased if we restrict to the union over $Q_{{\sf U}|{\sf Z}}\prod_{l=1}^m Q_{{\sf S}_l|{\sf UZ}}$.
\end{rem}
\begin{rem}
Previous results of Ahlswede-Csisz\'{a}r \cite{ahlswede1993common} and Csisz\'{a}r-Narayan \cite{csiszar2004secrecy} shown in \eqref{e1}-\eqref{e_3} are clearly special cases of Theorem~\ref{thm_1}.
\end{rem}

\subsection{Special Case: Omniscient Helper}
As alluded to in the Introduction, the problem reduces to an interesting special case when the communicator knows all other sources. In this situation the communicator can be viewed as a helper, since the requirement that it can recover the key is vacuous because of its omniscience; the rate region in Theorem~\ref{thm_1} can also be simplified as follows, since setting ${\sf S}_l={\sf X}_l$ in \eqref{reg0_1} is optimal.
\begin{thm}\label{thm_2}
In the special case of ${\sf Z=X}^m$, the set of achievable rates is the closure of
\begin{align}\label{e_om}
\bigcup_{Q_{{\sf U|X}^m}}
\left\{
\begin{array}{c}
  (R,R_1,\dots,R_m):  \\
  R\le \min\{I({{\sf U;X}^m}),H({\sf X}_1),\dots,H({\sf X}_m)\}; \\
  R_l\ge I({\sf U};{\sf X}^m)-I({\sf U};{\sf X}_l),\quad 1\le l\le m
\end{array}\right\}.
\end{align}
\end{thm}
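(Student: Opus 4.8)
The plan is to derive Theorem~\ref{thm_2} as a corollary of Theorem~\ref{thm_1} by specializing to $\mathsf{Z}=\mathsf{X}^m$ and showing that the choice $\mathsf{S}_l=\mathsf{X}_l$ is optimal in the region \eqref{reg0_1}. First I would substitute $\mathsf{Z}=\mathsf{X}^m$ into \eqref{reg0_1}: the key-rate constraint becomes $R\le\min\{I(\mathsf{U};\mathsf{X}^m), I(\mathsf{U}\mathsf{S}_1;\mathsf{X}_1),\dots,I(\mathsf{U}\mathsf{S}_m;\mathsf{X}_m)\}$ and the communication constraints become $R_l\ge I(\mathsf{U}\mathsf{S}_l;\mathsf{X}^m\mid\mathsf{X}_l)$. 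The goal is to show that this union over all $Q_{\mathsf{U}\mathsf{S}^m\mid\mathsf{X}^m}$ equals the union over $Q_{\mathsf{U}\mid\mathsf{X}^m}$ of the region \eqref{e_om}; one inclusion is trivial (take $\mathsf{S}_l=\mathsf{X}_l$, noting $\mathsf{X}_l$ is a deterministic function of $\mathsf{Z}=\mathsf{X}^m$, so $I(\mathsf{U}\mathsf{S}_l;\mathsf{X}_l)=I(\mathsf{U}\mathsf{X}_l;\mathsf{X}_l)=H(\mathsf{X}_l)$ and $I(\mathsf{U}\mathsf{S}_l;\mathsf{X}^m\mid\mathsf{X}_l)=I(\mathsf{U}\mathsf{X}_l;\mathsf{X}^m\mid\mathsf{X}_l)=I(\mathsf{U};\mathsf{X}^m\mid\mathsf{X}_l)=I(\mathsf{U};\mathsf{X}^m)-I(\mathsf{U};\mathsf{X}_l)$, using the chain rule and $H(\mathsf{X}_l\mid\mathsf{X}^m)=0$).

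The substantive direction is to show that for any joint law $Q_{\mathsf{U}\mathsf{S}^m\mid\mathsf{X}^m}$, the resulting rate tuple is dominated by the tuple obtained from $Q_{\mathsf{U}\mid\mathsf{X}^m}$ with $\mathsf{S}_l$ replaced by $\mathsf{X}_l$. For the communication rates I would argue $I(\mathsf{U}\mathsf{S}_l;\mathsf{X}^m\mid\mathsf{X}_l)\ge I(\mathsf{U};\mathsf{X}^m\mid\mathsf{X}_l)=I(\mathsf{U};\mathsf{X}^m)-I(\mathsf{U};\mathsf{X}_l)$, so replacing $\mathsf{S}_l$ by $\mathsf{X}_l$ only relaxes (does not tighten) each constraint on $R_l$. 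For the key rate, I need $\min\{I(\mathsf{U};\mathsf{X}^m),\,I(\mathsf{U}\mathsf{S}_1;\mathsf{X}_1),\dots\}\le\min\{I(\mathsf{U};\mathsf{X}^m),\,H(\mathsf{X}_1),\dots\}$; this holds term-by-term since $I(\mathsf{U}\mathsf{S}_l;\mathsf{X}_l)\le H(\mathsf{X}_l)$ always. Thus every achievable tuple in the $\mathsf{Z}=\mathsf{X}^m$ specialization of \eqref{reg0_1} lies in \eqref{e_om}, and conversely \eqref{e_om} is contained in the specialization via the explicit choice above; taking closures on both sides gives the claimed equality.

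A subtlety I would need to address is whether enlarging $\mathsf{S}_l$ to all of $\mathsf{X}_l$ is compatible simultaneously across all $l$ — but since in the omniscient case $\mathsf{X}_l$ is a function of the common $\mathsf{Z}=\mathsf{X}^m$, one can take the single auxiliary $\mathsf{U}$ unchanged and set $\mathsf{S}_l=\mathsf{X}_l$ for every $l$ at once, with no consistency obstruction; this also shows the Markov-structured restriction in the Remark after Theorem~\ref{thm_1} is automatically met. The only real work is the chain-rule bookkeeping in the paragraph above, and I expect the main (though minor) obstacle to be verifying that no other choice of the $\mathsf{S}_l$ could outperform $\mathsf{S}_l=\mathsf{X}_l$ — i.e., confirming the monotonicity $I(\mathsf{U}\mathsf{S}_l;\mathsf{X}_l)\le H(\mathsf{X}_l)$ and $I(\mathsf{U}\mathsf{S}_l;\mathsf{X}^m\mid\mathsf{X}_l)\ge I(\mathsf{U};\mathsf{X}^m\mid\mathsf{X}_l)$ goes the right way in both the numerator (key rate, where bigger is better) and the denominator (communication, where smaller is better), which it does. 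Hence the reduction is clean and Theorem~\ref{thm_2} follows directly from Theorem~\ref{thm_1}.
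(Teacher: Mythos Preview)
Your proposal is correct and follows exactly the paper's approach: the paper simply asserts that setting $\mathsf{S}_l=\mathsf{X}_l$ in \eqref{reg0_1} is optimal when $\mathsf{Z}=\mathsf{X}^m$, and you have supplied the routine chain-rule verification of that optimality in both directions. There is nothing to add.
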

The region in \eqref{e_om} has some special features:
\begin{rem}\label{rem_41}
The region for a product source can be strictly larger than the Minkowski sum of its factors. Indeed even with unconstrained communication rates, the supremum key rate is as in \eqref{e_3}, where the minimum implies that a joint encoding can asymptotically strictly outperform separate encoding of the independent components.
\end{rem}
\begin{rem}\label{rem_42}
The key rate can be positive even if ${\sf X}^m$ has independent coordinates. For example, when $m=2$ and ${\sf X}_1\perp{\sf X}_2$ are equiprobable binary random variables, a key rate of $R=1$ is achievable if the helper sends ${X_1}^n\oplus{X_2}^n$ to $\mathcal{X}_2$ and, thus, the terminals agree on ${X_1}^n$.
\end{rem}
\begin{rem}\label{rem_43}
Comparing the CR generation region \eqref{e_cr1},\eqref{e_cr2} and the key generation region \eqref{e_om}, we see that in the omniscient helper problem the secrecy constraint does not increase the required communication rates as long as $R\le\min_{1\le l\le m}H({\sf X}_l)$. In particular, this is unconditionally true for continuous sources with infinite entropy. But even when the rate regions coincide, the underlying achievability constructions are different; indeed the coding schemes for CR generation in (\cite[Theorem~4.2]{ahlswede1998common} and \cite{prabhakaran2014assisted}) do not provide security. The reason why secrecy can be gained with no extra cost for small $R$ is that the helper shares sufficient secure randomness (the sources) with the other terminals to protect its messages.
\end{rem}

\section{One-Shot Achievability via Likelihood Encoder}
We outline the derivation of a one-shot achievability bound using the recent proof technique of likelihood encoding \cite{cuff2012distributed}\cite{song}. This method adapts to general non-discrete, non-i.i.d.~sources and simplifies the analysis of the secrecy constraint. Some standard notations in one-shot information theory, which may be found in reference \cite{verdu2012non}, will be used in this section.
% (see another example in \cite{Liu}).
\begin{thm}\label{thm_ach}
Suppose the sources have joint distribution $Q_{ZX^m}$. Fix an arbitrary $Q_{U|Z}$, $Q_{S_lU|Z}$, $1\le l\le m$, integers $I_0,\dots,I_m$ and $J_1\dots,J_m$. Then there exists a key generation scheme with $|\mathcal{K}|=I_0$, $|\mathcal{W}_l|=\prod_{i=1}^lI_iJ_l$ which guarantees that
\begin{align}
&\quad\mathbb{P}[K\neq K_l]\le 2m(\epsilon+T+T_l),
\\
&\quad\log |\mathcal{K}|-H(K|W_l)\nonumber
\\
&\le \inf_{0<\delta<(IJ_l)^{\frac{3}{2}}\exp(-1)}\left\{4m(2T+T_l+2\delta)\log\frac{(IJ_l)^{\frac{3}{2}}}{\delta}\right\}
\end{align}
for each $1\le l\le m$, where we have defined
\begin{align}
I:=\prod_{l=0}^m I_l,
\end{align}
\begin{align}
T:=\inf_{\gamma\in\mathbb{R}}\left\{\mathbb{P}[\imath_{U;Z}(U;Z)>\gamma]
+\frac{\exp(\frac{\gamma}{2})}{2\sqrt{I}}\right\},\label{e_t}
\end{align}
\begin{align}\label{e_tl}
T_l:=\inf_{\gamma\in\mathbb{R}}\left\{\mathbb{P}[\imath_{S_l;Z|U}(S_l;Z|U)>\gamma]
+\frac{\exp(\frac{\gamma}{2})}{2\sqrt{J_l}}\right\},
\end{align}
\begin{align}\label{e_ep}
\epsilon&:=\max_{1\le l\le m}\inf_{\gamma\in\mathbb{R}}\{
\mathbb{P}[\imath_{US_l;X_l}(US_l;X_l)\le\log(I_0-1)+\gamma]
\nonumber
\\
&\quad+\exp(-\gamma)\}.
\end{align}
\end{thm}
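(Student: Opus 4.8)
The plan is to exhibit a single randomized superposition code together with a likelihood encoder at the communicator, and to prove reliability and secrecy by comparing the joint law that the code induces on the sources, codeword indices and messages with an \emph{ideal} law under which every soft-covering step holds exactly. Concretely, draw a cloud codebook $\{u(i_0,\dots,i_m)\}$ of $I=\prod_{l=0}^m I_l$ codewords i.i.d.\ from the marginal $Q_U$ of $Q_{U|Z}$ and, superimposed on each cloud codeword, for every $l$ a satellite codebook of $J_l$ codewords i.i.d.\ from $Q_{S_l|U}$. On observing $Z$ the communicator applies the likelihood encoder of \cite{cuff2012distributed}\cite{song}: it samples the cloud index $(i_0,\dots,i_m)$ with probability proportional to $Q_{Z|U}(Z\mid u(i_0,\dots,i_m))$, then each satellite index $j_l$ with probability proportional to the conditional likelihood of $Z$ given the chosen codewords. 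It declares the key $K=i_0$ and relays to $\mathcal X_l$ the message $W_l$ consisting of the prefix $(i_1,\dots,i_l)$ of the cloud index together with $j_l$; terminal $\mathcal X_l$ then runs a threshold (likelihood) decoder which, from $X_l$ and $W_l$, recovers the missing part of the cloud/satellite index and, in particular, $i_0$.

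The second step invokes the one-shot soft-covering (resolvability) lemma. Applied to the cloud layer, it shows that the codebook-averaged induced law of $(Z,U)$ is within total variation $T$ of $Q_{ZU}$, and $T$ is precisely \eqref{e_t} because $I$ codewords are used against the information density $\imath_{U;Z}$; one further application, conditioned on each cloud codeword, contributes $T_l$ of \eqref{e_tl} for satellite layer $l$. Stacking these coverings and discarding the indices $j_{l'}$, $l'\ne l$, that do not enter $W_l$, the induced law of $(Z,U,S_l)$ together with the indices relevant to terminal $l$ lies within total variation $T+T_l$ of its ideal counterpart. The economy of the likelihood-encoder approach is that this single family of covering estimates will be re-used for both the reliability and the secrecy analysis; the multiplicative constants $2m$ and $4m$ in the statement come only from union bounds over the $m$ terminals and the triangle inequalities needed to pass between the induced and ideal laws.

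For reliability, note that under the ideal law the transmitted pair $(U,S_l)$ is a genuine sample of $Q_{US_l}$ seen through the channel $Q_{X_l|US_l}$, while every competing cloud/satellite codeword that is consistent with the relayed $W_l$ is independent of $X_l$; hence recovering $i_0$ is a one-shot channel-coding problem with $I_0-1$ competitors, for which a dependence-testing/Feinstein-type bound gives decoding error at most $\epsilon$ of \eqref{e_ep}, and transporting this back through $T+T_l$ and taking a union over $l$ yields $\mathbb{P}[K\ne K_l]\le 2m(\epsilon+T+T_l)$. For secrecy, observe that under the same ideal law $K=i_0$ is uniform and \emph{independent} of $W_l=(i_1,\dots,i_l,j_l)$ by construction, so the induced law of $(K,W_l)$ is within total variation $O\!\left(m(T+T_l)\right)$ of the product of the uniform law on $\mathcal K$ with the law of $W_l$. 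The deficit $\log|\mathcal K|-H(K|W_l)$ then follows from the standard continuity-of-entropy (reverse-Pinsker-type) estimate, bounding the entropy deficit on an alphabet $\mathcal A$ by $\theta\log(|\mathcal A|/\theta)$ when $\theta$ is the total-variation distance to uniformity; taking $\mathcal A$ to absorb $\mathcal K\times\mathcal W_l$ (of cardinality at most $IJ_l$) and optimizing the free slack $\delta\in(0,(IJ_l)^{3/2}e^{-1})$ reproduces $\inf_\delta\{4m(2T+T_l+2\delta)\log((IJ_l)^{3/2}/\delta)\}$.

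The main obstacle is not any individual estimate but the \emph{joint} design. One must arrange the superposition structure so that relaying only the prefix $(i_1,\dots,i_l)$ genuinely collapses terminal $l$'s decoding list to $I_0$ candidates — this is the nonstandard use of superposition (the upper-layer index being fed to the receiver to decode the lower layer) advertised in the introduction — and simultaneously verify that the \emph{same} code keeps $i_0$ decodable by every legitimate terminal yet statistically invisible in every public message, so that one pair of soft-covering bounds certifies reliability and secrecy at once. The careful bookkeeping of how the cloud-covering composes with the $m$ satellite-coverings, together with the passage from total variation to entropy deficit with the right power of the alphabet size, is where the bulk of the work lies.
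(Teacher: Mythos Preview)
Your proposal is correct and follows essentially the same route as the paper's proof sketch: the identical cloud/satellite codebook with $I=\prod_l I_l$ cloud codewords and $J_l$ satellites per cloud drawn from $Q_{S_l|U}$, the same two-stage likelihood encoder $\hat{P}_{V|Z}\prod_l P^{(l)}_{\tilde W_l|VZ}$, the same nested assignment $K=i_0$, $W_l=(i_1,\dots,i_l,j_l)$, and the same analysis in which iterated soft covering replaces the induced law $\pi$ by the ideal law $P^{(l)}$ (giving the $T,T_l$ terms) after which Shannon's achievability bound controls decoding and the product structure of $\mu_V\mu_{\tilde W_l}$ yields the $K\perp W_l$ needed for the entropy-continuity step. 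Your identification of the unusual ``send the upper-layer index to help decode the lower layer'' mechanism is exactly the point the paper emphasizes.
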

\begin{proof}[Proof Sketch]
\begin{itemize}
  \item \emph{Codebook construction}: for each $l=1,\dots,m$ define the set
      \begin{align}
      \mathcal{I}_l:=\{1,\dots,I_l\}.
      \end{align}
      Construct a codebook $u(i_0,i_1,\dots,i_m)$, $i_l\in\mathcal{I}_l$, $0\le l\le m$, where each codeword is generated i.i.d.~according to $Q_U$. Let $\mathcal{I}:=\mathcal{I}_0\times\mathcal{I}_1\times\dots\times\mathcal{I}_m$
      and $I=|\mathcal{I}|$.
      For each $i\in\mathcal{I}$ and $1\le l\le m$, independently generate a codebook
      \begin{align}
      [s_l(i,j)]_{j\in\mathcal{J}_l}
      \end{align}
      where each codeword is generated i.i.d.~according to $Q_{S_l|U=u(i)}$.

      \item \emph{Encoding}: define $\mu_V$ as the equiprobable distribution on $\mathcal{I}$, and
      \begin{align}
      \hat{P}_{Z|V=i}&:=Q_{Z|U=u(i)},\quad \forall i;
      \\
      \hat{P}_{ZV}&:=\hat{P}_{Z|V}\mu_V.
      \end{align}
      Moreover for each $l$, let $\mu_{\tilde{W}_l}$ be the equiprobable distribution on $\mathcal{J}_l$, and define
      \begin{align}
      P_{Z|\tilde{W}_l=jV=i}^{(l)}&:=Q_{Z|S_l=s_l(i,j)U=u(i)},\forall i,j;
      \\
      P_{Z\tilde{W}_lV}^{(l)}&:=
      P_{Z|\tilde{W}_lV}^{(l)}\mu_{\tilde{W}_l}\mu_{V}.
      \end{align}
      Then the encoder is a stochastic map
      \begin{align}\label{e_li2}
      \pi_{V\tilde{W}^m|Z}:=\hat{P}_{V|Z}\prod_{l=1}^mP_{\tilde{W}_l|VZ}^{(l)}
      \end{align}
      that maps the observation $z\in\mathcal{Z}$ to $v\in\mathcal{I}$ and $\tilde{w}^m\in\mathcal{J}^m$. In other words, we first find $v$ using a likelihood encoder with the likelihood function $\hat{P}_{Z=z|V}$ and then find $\tilde{w}_l$ using a likelihood encoder with the likelihood function $P_{Z=z|\tilde{W}_lV=v}^{(l)}$.
    Suppose $v=(v_0,v_1,\dots,v_m)$ where $v_l\in \mathcal{I}_l$, $0\le l\le m$. We identify $k=v_0$ and $w_l=(\tilde{w}_l,v_1^l)$ as the key for the communicator and the public messages. Note that the second components of $w_l$ have a nested (aligned) structure, which is important for maximizing the key rate.

    \item \emph{Error analysis}: The main idea is to use the soft covering lemma (c.f.~ \cite[Theorem~VII.1]{cuff2012distributed} or \cite{hayashi2006general}) iteratively to show that the true distribution $\pi_{\tilde{W}_lVZ}$ is close to $\hat{P}_{\tilde{W}_lVZ}^{(l)}$ in total variation (expected over the codebook). By construction $\tilde{W}_l$ and $V$ are independent under $\hat{P}^{(l)}$, implying that the individual message and the key are also nearly independent under $\pi$. Moreover, the decoding error probability of the receivers under $\hat{P}^{(l)}$ can be bounded directly by Shannon's achievability bound \cite{shannon1957certain}.
\end{itemize}
\end{proof}
Theorem~\ref{thm_ach} immediately implies the achievability part of the region \eqref{reg0_1} in the i.i.d.~case: assume without loss of generality that the sources are ordered in such a way that $I({\sf US}_l;{\sf X}_l)$
is non-increasing in $l$. We then identify $(S^{m},U,X_1,X_2,\dots,X_m,Z)$ in Theorem~\ref{thm_ach} as the block-coding counterpart $(S^{mn},U^n,{X_1}^n,\dots, {X_m}^n,Z^n)$ and let $\delta$ be exponentially converging to zero as $n\to\infty$, and
\begin{align}
J_l&:=\exp(n(I({\sf S}_l;{\sf X}_l|{\sf U})+\beta))),\quad l=1,\dots,m;
\\
I_0&:=\exp(n(\min\{I({\sf U;Z}),I({\sf US}_m;{\sf X}_m)\}-\beta));
\\
I_l&:=\exp(n(\min\{I({\sf U;Z}),I({\sf US}_{l-1};{\sf X}_{l-1})\}\nonumber
\\
    &\quad-\min\{I({\sf U;Z}),I({\sf US}_l;{\sf X}_l)\})),\quad l=2,\dots,m;
\\
I_1&:=\exp(n(I({\sf U;Z})-\min\{I({\sf U;Z}),I({\sf US}_1;{\sf X}_1)\}))
\end{align}
to show the achievability of rates
\begin{align}
R&:=\min\{I({\sf U;Z}),I({\sf US}_m;{\sf X}_m)\}-\beta;
\\
R_l&:=\max\{I({\sf S}_l;{\sf Z|U}),I({\sf US}_l;{\sf Z|X}_l)\}+3\beta,\quad 1\le l\le m
\end{align}
for $\beta>0$ arbitrary. This establishes the achievability of
\begin{align}
\bigcup_{Q_{{\sf US}^m|{\sf Z}}}
\left\{
\begin{array}{c}
  (R,R_1,\dots,R_m):
  \quad R\le
   \\
   \min\{I({\sf U;Z}),I({\sf US}_1;{\sf X}_1),\dots,I({\sf US}_m;{\sf X}_m)\} \\
  R_l\ge \max\{I({\sf S}_l;{\sf Z|U}),I({\sf US}_l;{\sf Z|X}_l)\},\quad \forall l
\end{array}\right\}.
\label{e_26}
\end{align}
Then the achievability of \eqref{reg0_1} follows by noting that the boundary of \eqref{e_26} can be achieved when ${\sf S}_l$ is chosen so that the two terms in the max are equal.

\section{Converse}
Due to space, this section only presents the main idea for the converse of Theorem~\ref{thm_1}.
\subsection{Deterministic Encoder}\label{sec_det}
We first consider the case where $K$ and $W^m$ are functions of $Z^n$ (but $\mathcal{X}_l$ are allowed to calculate their keys randomly from $(W_l,{X_l}^n)$, for $1\le l\le m$).
Given a key generation scheme, denote by $K,K_1,\dots,K_m$ the keys produced by $\mathcal{Z},\mathcal{X}_1,\dots\mathcal{X}_m$ and $W_1,W_2,\dots,W_m$ the messages sent to $\mathcal{X}_1,\dots,\mathcal{X}_m$. Define
\begin{align}
U_i&:=(K,Z^{i-1});\label{e_u}
\\
S_{li}&:=(W_l,{X_l}^{i-1}),\quad 1\le l\le m, 1\le i\le n\label{e_s}
\end{align}
and let $N$ be equiprobable on $\{1,\dots,n\}$ independent of all previously defined random variables. We identify
\begin{align}\label{e_us}
{\sf U}=U_N, \quad{\sf S}_l=S_{lN},\quad\forall l,
\end{align}
which fulfills that
\begin{align}
({\sf U},{\sf S}_1,\dots,{\sf S}_m)-{\sf Z}-({\sf X}_1,\dots,{\sf X}_m).
\end{align}
The bounds in Theorem~\ref{thm_1} can be verified using entropic manipulations and Fano's inequality.

\subsection{Stochastic Encoders}
The converse for stochastic encoders cannot be obtained by simple modifications of the analysis in \ref{sec_det}. Indeed, the bound in \eqref{reg0_1} no longer holds for stochastic encoders if we stick to the assignment of the auxiliary random variables in \eqref{e_u}-\eqref{e_us}. An alternative approach is to view a stochastic encoder as a deterministic function of $Z^n$ and $V$, where $V$ is a random number satisfying $({X_1}^n,\dots,{X_m}^n)-Z^n-V$, and then employ the converse for deterministic encoders. We immediately see that any achievable rates $(R,R_1,\dots,R_m)$ must satisfy
\begin{align}
R&\le \min\{I({\sf U;ZV}),I({\sf US}_1;{\sf X}_1),\dots,I({\sf US}_m;{\sf X}_m)\};\label{regv1}
\\
R_l&\ge \max\{I({\sf S}_l;{\sf ZV|U}),
I({\sf US}_l;{\sf ZV}|{\sf X}_l)\},\quad 1\le l\le m\label{regv2}
\end{align}
for some $P_{\sf USTV|Z}$. Then it is possible to show that the region specified by \eqref{regv1}-\eqref{regv2} is equivalent to the region specified in \eqref{reg0_1} upon optimization.

\section{A Zero-Rate One-Shot Converse}
In this section we derive a novel one-shot bound, using hypercontractivity, on the maximum ratio of the log alphabet sizes of the key and the messages such that the key can be successfully generated in the omniscient helper problem. Since this ratio is supremized as the key rate and the communication rates tend to zero, such a converse bound may also be called a \emph{zero-rate} converse. The bound is asymptotically tight in the case of abundant correlated sources but limited communication rates, and gives a \emph{strong converse} as it shows that the total variation between the true and the correct distributions tends to the maximal value under appropriate rate conditions. On the other hand, previous works have obtained one-shot converses using smooth Renyi entropy \cite{renner2005simple} or the meta-converse idea \cite{hayashi2014secret}\cite{tyagi2014bound}, for which the asymptotic tightness are achieved in the other extreme of limited correlated sources but unlimited communications.

An $m$-tuple of random variables $(X_1,\dots,X_m)$ is said to be $(p_1,\dots,p_m)$-hypercontractive for $p_l\in[1,\infty)$, $l=1,\dots,m$ if
\begin{align}\label{hyper1}
\mathbb{E}\left[\prod_{l=1}^m f_l(X_l)\right]\le \prod_{l=1}^m\|f_l(X_l)\|_{p_l}
\end{align}
for all bounded real-valued measurable functions $f_l$ defined on $\mathcal{X}_l$, $l=1,\dots,m$. In \cite{nair}, Nair showed that \eqref{hyper1} is equivalent to the following inequality\footnote{In \cite{nair} the equivalence is demonstrated for $m=2$, but the method therein can be easily extended to the $m>2$ case.}
\begin{align}\label{e_41}
I(U;X^m)\ge \sum_{l=1}^m \frac{1}{p_l}I(U;X_l)
\end{align}
being valid for all $P_{U|X^m}$. Thus from Theorem~\ref{thm_2} and \eqref{e_41}, key generation cannot be accomplished asymptotically if
\begin{align}\label{e_convf}
R<\sum_{l=1}^m \frac{1}{p_l}(R-R_l);
\end{align}
while if $r_1,\dots,r_m$ satisfies the property that $1\ge\sum_{l=1}^m \frac{1}{p_l}(1-r_l)$ for all $(p_1,\dots,p_m)$ such that $(X_1,\dots,X_m)$ is $(p_1,\dots,p_m)$-hypercontractive, then there exists $(R,R_1,R_2,\dots,R_m)$ achievable such that $\frac{R_l}{R}=r_l$ for each $l=1,\dots,m$.

We prove a zero-rate one-shot converse for the omniscient helper problem.
Consider the one-shot case.
Suppose the (possibly stochastic) encoder for the public messages is specified by $P_{W^m|X^m}$ and the (possibly stochastic) decoder for the key is given by $\prod_{l=1}^m P_{K_l|X_lW_l}$. Let $\mu_{K^m}$ be the \emph{correct} distribution under which $K_1=K_2=\dots=K_m$ is equiprobably distributed on $\mathcal{K}$. Clearly, a small total variation $|P_{K^m}-\mu_{K^m}|$ implies both uniformity of the key distribution and a small probability of key disagreement.
\begin{thm}\label{thm_sconv}
In the omniscient helper problem, if the source $X^m$ is $(p_1,\dots,p_m)$-hypercontractive,\footnote{In the i.i.d.~case this is equivalent to the per-letter source ${\sf X}^m$ being $(p_1,\dots,p_m)$-hypercontractive by the tensorization property \cite{nair}.} then
\begin{align}
\frac{1}{2}|P_{K^m}-\mu_{K^m}|
\ge1-\frac{1}{|\mathcal{K}|}
-\left[|\mathcal{K}|\prod_{l=1}^m\left(\frac{|\mathcal{W}_l|}{|\mathcal{K}|}\right)^\frac{1}{p_l}\right]^{\frac{1}{\sum p_l^{-1}}}
\end{align}
\end{thm}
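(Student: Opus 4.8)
The plan is to reduce the problem to an application of the hypercontractivity inequality \eqref{hyper1} with a cleverly chosen product of indicator-type functions, and to extract a total-variation bound from the resulting scalar inequality. First I would set up the ``success event.'' For each $l$, let $g_l(x_l,w_l) := P_{K_l \mid X_l W_l}(\cdot \mid x_l, w_l)$ be viewed as a probability vector on $\mathcal{K}$; the joint distribution $P_{K^m X^m W^m}$ factors as $\mu_{X^m} \, P_{W^m \mid X^m} \prod_l P_{K_l \mid X_l W_l}$. The quantity that controls everything is $\mathbb{E}\big[\prod_{l=1}^m \mathbf{1}\{K_l = k\} \,\big]$ summed appropriately over $k$, or more precisely the probability that all $m$ decoders agree on a common value. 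I would write, for a fixed value $w^m$ of the messages and a fixed target symbol $k \in \mathcal{K}$, the conditional agreement probability as an expectation over $X^m$ of $\prod_{l=1}^m P_{K_l \mid X_l W_l}(k \mid X_l, w_l)$. Here is the key point: conditioned on $W^m = w^m$, the source $X^m$ has a distribution that is absolutely continuous with respect to $\mu_{X^m}$ with density $P_{W^m \mid X^m}(w^m\mid\cdot)/P_{W^m}(w^m)$, so I can fold the message-dependence into the functions and still apply hypercontractivity under $\mu_{X^m}$.

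Second, I would apply \eqref{hyper1}. Taking $f_l(x_l) = $ (something like) $P_{K_l\mid X_l W_l}(k\mid x_l, w_l)$ — possibly with a factor absorbing the change of measure from $P_{W^m}$ — hypercontractivity gives that the agreement probability is at most $\prod_{l=1}^m \|f_l\|_{p_l}$, where the $L^{p_l}$ norm is under $\mu_{X_l}$. The crucial computation is to bound $\|f_l\|_{p_l}$: since $f_l$ is a probability (bounded by $1$), we have $\|f_l\|_{p_l}^{p_l} = \mathbb{E}_{\mu_{X_l}}[f_l^{p_l}] \le \mathbb{E}_{\mu_{X_l}}[f_l]$. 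Now I sum/average over $k$ and over $w^m$. Averaging $\mathbb{E}_{\mu_{X_l}}[f_l(X_l, w_l)]$ over $w_l \sim P_{W_l}$ gives $P_{K_l}(k)$ under a ``mismatched'' pairing — and this is where I need to be careful about normalization. The accounting should produce a term of the form $|\mathcal{K}| \prod_l (|\mathcal{W}_l|/|\mathcal{K}|)^{1/p_l}$ raised to the power $1/\sum p_l^{-1}$: the $|\mathcal{W}_l|$ factors come from the crude bound $P_{W_l}(w_l) \ge 0$ replaced by a uniform-like bound or from counting the number of message values, the $1/|\mathcal{K}|$ factors come from the uniformity of the target $k$, and the outer exponent $1/\sum p_l^{-1}$ arises from a Hölder/normalization step gluing the $m$ factors together.

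Third, I would convert the bound on total agreement probability into the claimed total-variation bound. Under the correct distribution $\mu_{K^m}$, all coordinates agree and the common value is uniform, so $\mu_{K^m}$ is supported on the ``diagonal.'' The probability that $P_{K^m}$ assigns to the diagonal is exactly $\sum_{k}\mathbb{P}[K_1 = \dots = K_m = k]$, which is what Steps 1–2 upper-bounded. Since $\frac{1}{2}|P_{K^m} - \mu_{K^m}| \ge \mu_{K^m}(A) - P_{K^m}(A)$ for any event $A$, taking $A$ to be the diagonal gives $\frac{1}{2}|P_{K^m}-\mu_{K^m}| \ge 1 - \mathbb{P}[K_1=\dots=K_m]$. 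Combining with the agreement bound from Step 2 — and separating out the $1/|\mathcal{K}|$ term (which accounts for the event that all decoders agree ``by chance'' on the same symbol, captured by the diagonal mass that any distribution must put on a single fixed symbol) — yields exactly the stated inequality.

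The main obstacle I anticipate is the bookkeeping in Step 2: getting the change-of-measure from conditioning on $W^m$ handled so that hypercontractivity applies under the product of marginals $\mu_{X_l}$ (not under $\mu_{X^m}$ conditioned on $W^m$, which need not be hypercontractive), while simultaneously tracking how the $|\mathcal{W}_l|$ and $|\mathcal{K}|$ factors enter and why the final exponent is $1/\sum_l p_l^{-1}$. One likely needs to apply hypercontractivity not to the raw decoder probabilities but to suitably rescaled functions, and then invoke Hölder's inequality once more across $l$ to consolidate; the exponent $1/\sum_l p_l^{-1}$ strongly suggests a step where one raises the hypercontractivity inequality to a common power and applies the weighted AM–GM or Hölder inequality with weights $p_l^{-1}/\sum p_l^{-1}$. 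Verifying that all these inequalities point in the right direction and that no slack is lost (so that the bound is genuinely a strong converse, approaching the maximal total variation) is the delicate part.
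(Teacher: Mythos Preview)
Your overall architecture matches the paper's: bound the per-symbol agreement probability $\mathbb{P}[\bigcap_l\{K_l=k\}]$ via hypercontractivity, aggregate over $k$, and convert to a total-variation lower bound. However, the step you flag as the ``main obstacle'' is exactly where your proposal diverges from the paper, and your proposed workaround would not go through.

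You suggest handling the dependence on $W^m$ by a change of measure, absorbing the Radon--Nikodym density $P_{W^m|X^m}(w^m|\cdot)/P_{W^m}(w^m)$ into the functions $f_l$. But this density depends on the full vector $x^m$, not on a single coordinate, so it cannot be split across the $f_l$'s, and hypercontractivity under $\mu_{X^m}$ is no longer applicable to the resulting integrand. The paper avoids this entirely: starting from
\[
\mathbb{P}\Big[\bigcap_l\{K_l=k\}\Big]
=\int_{\mathcal{X}^m}\sum_{w^m}\Big(\prod_l P_{K_l=k|X_lW_l=w_l}\Big)\,P_{W^m|X^m}\,{\rm d}P_{X^m},
\]
it replaces the inner sum by $\max_{w^m}$, then uses $\max_{w^m}\prod_l\le\prod_l\max_{w_l}$ to obtain a genuine product of one-coordinate functions $f_l(x_l)=\max_{w_l}P_{K_l=k|X_l=x_l,W_l=w_l}$. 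Hypercontractivity now applies directly under $P_{X^m}$. Only \emph{after} hypercontractivity and the step $f_l^{p_l}\le f_l$ (which you correctly anticipated) is the max re-expanded as $\max_{w_l}\le\sum_{w_l}$; this is what produces the factor $|\mathcal{W}_l|$ sitting \emph{inside} the $1/p_l$ power.

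Two smaller points. First, the outer exponent $1/\sum_l p_l^{-1}$ does not come from a second H\"older step across $l$; it arises by raising the per-$k$ bound to that power and then applying Jensen's inequality to the concave map $(t_1,\dots,t_m)\mapsto\prod_l t_l^{p_l^{-1}/\sum_i p_i^{-1}}$ when averaging over $k\in\mathcal{K}$. This is what collapses $\sum_k P_{K_l=k|X_lW_l=w_l}$ to $1$ and yields the clean $(|\mathcal{W}_l|/|\mathcal{K}|)^{p_l^{-1}/\sum_i p_i^{-1}}$. Second, your Step~3 bound $\tfrac{1}{2}|P_{K^m}-\mu_{K^m}|\ge 1-\mathbb{P}[K_1=\dots=K_m]$ is too coarse to produce the stated form; the paper instead writes the total variation exactly as the off-diagonal $P$-mass plus $\sum_k|\,\mathbb{P}[\cap_l\{K_l=k\}]-1/|\mathcal{K}|\,|$, and lower-bounds this combination using the power-$1/\sum p_l^{-1}$ averages directly, which is where the separate $-1/|\mathcal{K}|$ term comes from.
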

\begin{rem}
Theorem~\ref{thm_sconv} only concerns the performance of CR generation, which will provide an obvious upper bound on the performance of key generation. For the omniscient helper problem, it turns out to be tight because the highest key-communication ratio is achieved with small rates (by convexity of the achievable region), in which regime the secrecy constraint does not require higher communication rates (Remark~\ref{rem_43}).
\end{rem}
\begin{rem}
Theorem~\ref{thm_sconv} yields a stronger converse on the achievable ratio of the the log alphabet sizes of the key and the messages than Theorem~\ref{thm_2}, because:
\begin{itemize}
  \item The converse from Theorem~\ref{thm_2} is vacuous when the rates are zero. In contrast, Theorem~\ref{thm_sconv} is still applicable when the log size of the key alphabet grows sub-linearly in the blocklength. In fact, as long as
     \begin{align}
    \log|\mathcal{K}|-\sum_{l=1}^m \frac{1}{p_l}(\log|\mathcal{K}|-\log|\mathcal{W}_l|)\to-\infty
    \end{align}
    which is weaker than \eqref{e_convf}, Theorem~\ref{thm_sconv} implies that $|P_{K^m}-\mu_{K^m}|$ converges to 2.
  \item Even if \eqref{e_convf} holds, the converse of Theorem~\ref{thm_2} relying on Fano's inequality does not guarantee that the error probability in \eqref{e_m1} tends to 1. Moreover Theorem~\ref{thm_2} uses relative entropy as the secrecy measure \eqref{e_m2} (stronger than total variation), amounting to a weaker converse.
\end{itemize}
\end{rem}
\begin{proof}
For any $k\in\mathcal{K}$,
\begin{align}
&\quad\mathbb{P}\left[\bigcap_{l=1}^m\{K_l=k\}\right]
\\
&=\int_{\mathcal{X}^m}\sum_{w^m}\prod_{l=1}^m P_{K_l=k|X_lW_l=w_l}P_{W^m|X^m}{\rm d}P_{X^m}
\\
&\le \int_{\mathcal{X}^m}\max_{w^m}\prod_{l=1}^m P_{K_l=k|X_lW_l=w_l}{\rm d}P_{X^m}
\\
&\le \int_{\mathcal{X}^m}\prod_{l=1}^m \max_{w_l} P_{K_l=k|X_lW_l=w_l}{\rm d}P_{X^m}
\\
&\le \prod_{l=1}^m\left[\int_{\mathcal{X}_l} (\max_{w_l} P_{K_l=k|X_lW_l=w_l})^{p_l}{\rm d}P_{X_l}\right]^{\frac{1}{p_l}}\label{e_hyp}
\\
&\le \prod_{l=1}^m\left[\int_{\mathcal{X}_l} \max_{w_l} P_{K_l=k|X_lW_l=w_l}{\rm d}P_{X_l}\right]^{\frac{1}{p_l}}\label{e_pl}
\\
&\le \prod_{l=1}^m\left[\sum_{w_l}\int_{\mathcal{X}_l}  P_{K_l=k|X_lW_l=w_l}{\rm d}P_{X_l}\right]^{\frac{1}{p_l}}\label{e_la}
\end{align}
where
\begin{itemize}
  \item \eqref{e_hyp} uses the definition of hypercontractivity;
  \item \eqref{e_pl} uses $p_l>1$ and $\max_{w_l} P_{K_l=k|X_lW_l=w_l}\le1$.
\end{itemize}
Raising both sides of \eqref{e_la} to the power of $\frac{1}{\sum_{i=1}^m p_i^{-1}}$, we obtain
\begin{align}\label{e_55}
&\quad\mathbb{P}\left[\bigcap_{l=1}^m\{K_l=k\}\right]^{\frac{1}{\sum_i p_i^{-1}}}
\nonumber
\\
&\le\prod_{l=1}^m\left[\sum_{w_l}\int_{\mathcal{X}_l}  P_{K_l=k|X_lW_l=w_l}{\rm d}P_{X_l}\right]^{\frac{p_l^{-1}}{\sum_i p_i^{-1}}}
\end{align}
But the function $t^m\mapsto \prod_{l=1}^mt_l^{\frac{p_l^{-1}}{\sum_i p_i^{-1}}}$ is a concave function on $[0,\infty)^m$, so by Jensen's inequality,
\begin{align}
&\quad\frac{1}{|\mathcal{K}|}\sum_{k=1}^{|\mathcal{K}|}\prod_{l=1}^m\left[\sum_{w_l}\int_{\mathcal{X}_l}  P_{K_l=k|X_lW_l=w_l}{\rm d}P_{X_l}\right]^{\frac{p_l^{-1}}{\sum_i p_i^{-1}}}
\\
&\le
\prod_{l=1}^m\left[\sum_{w_l}\int_{\mathcal{X}_l}  \frac{1}{|\mathcal{K}|}\sum_{k=1}^{|\mathcal{K}|}P_{K_l=k|X_lW_l=w_l}{\rm d}P_{X_l}\right]^{\frac{p_l^{-1}}{\sum_i p_i^{-1}}}
\\
&=
\prod_{l=1}^m\left[\sum_{w_l}\int_{\mathcal{X}_l}  \frac{1}{|\mathcal{K}|}{\rm d}P_{X_l}\right]^{\frac{p_l^{-1}}{\sum_i p_i^{-1}}}
\\
&=
\prod_{l=1}^m\left(\frac{|\mathcal{W}_l|}{|\mathcal{K}|}\right)
^{\frac{p_l^{-1}}{\sum_i p_i^{-1}}}\label{e_59}
\end{align}
Combining \eqref{e_55} and \eqref{e_59} we obtain
\begin{align}\label{e_f0}
\frac{1}{|\mathcal{K}|}\sum_{k=1}^{|\mathcal{K}|}\mathbb{P}
\left[\bigcap_{l=1}^m\{K_l=k\}\right]^{\frac{1}{\sum_i p_i^{-1}}}
\le \prod_{l=1}^m\left(\frac{|\mathcal{W}_l|}
{|\mathcal{K}|}\right)^{\frac{p_l^{-1}}{\sum_i p_i^{-1}}}.
\end{align}
Finally we invoke the following elementary bound:
\begin{align}\label{e_f2}
&\quad\frac{1}{2}|P_{K^m}-\mu_{K^m}|
\nonumber
\\
&=\sum_{k=1}^{|\mathcal{K}|}\left|\mathbb{P}\left[\bigcap_{l=1}^m\{K_l=k\}\right]-\frac{1}{|\mathcal{K}|}\right|
+1-\sum_{k=1}^{|\mathcal{K}|}\mathbb{P}\left[\bigcap_{l=1}^m\{K_l=k\}\right]
\\
&\ge 1-\frac{1}{|\mathcal{K}|}-|\mathcal{K}|^{\frac{1}{\sum_l p_l^{-1}}-1}\sum_{k=1}^{|\mathcal{K}|}\mathbb{P}\left[\bigcap_{l=1}^m\{K_l=k\}\right]^{\frac{1}{\sum_l p_l^{-1}}}
\end{align}
and the proof is finished by combining \eqref{e_f0} and \eqref{e_f2}.
\end{proof}

\section{Discussion}
It remains an enticing problem for future research to find out whether the achievable region is changed if we further require that the key has to be independent of all messages, instead of each message individually (see \eqref{e_m2}). Such a stronger secrecy constraint is relevant when a powerful eavesdropper is able to intercept the messages to all the receivers. Our achievability proof does not guarantee this stronger level of secrecy, but for some specific sources it is possible to use structured codes to align different sub-codebooks so that the achievable rates do not change. Furthermore, in the unlimited communication case the key rate is not compromised by the stronger requirement either; see \eqref{e_3}. Generally, an inner bound can be obtained by replacing $S_l$ in \eqref{reg0_1} with $S^l$, the proof of which involves a vertical structure of superposition codebooks for $U$, $S_1$, ..., $S_m$, in contrast to the parallel structure of $S_1$, ..., $S_m$ in the achievability proof of Theorem~\ref{thm_1}.

Theorem~\ref{thm_sconv} also gives an asymptotically tight strong converse bound for the canonical one-way protocol (Model~S with forward communication in \cite{ahlswede1993common}). By setting $m=2$ and $R_1=0$, the resulting model immediately gives an upper-bound on the performance of a CR generation model where $\mathcal{X}_1$ communicates to $\mathcal{X}_2$. This in turn bounds performance of one-way key generation model because the public communication can be used as part of the CR. In the end we can show that the TV between $P_{KW}$ and the correct distribution $\mu_{KW}=\mu_K\mu_W$ tends to $2$ if $\log|\mathcal{K}|-\frac{s^*(X_1;X_2)}{1-s^*(X_1;X_2)}\log|\mathcal{W}|\to\infty$, where $s^*(X_1;X_2)$ is the strong data processing coefficient \cite{nair}.

\section{Acknowledgments}
The authors would like to thank Himanshu Tyagi, Shun Watanabe and Sudeep Kamath for helpful literature background information.
This work was supported by NSF under Grants CCF-1350595, CCF-1116013, CCF-1319299, CCF-1319304, and the Air Force Office of Scientific Research under Grant FA9550-12-1-0196.

\bibliographystyle{ieeetr}
\bibliography{ref_om}
\end{document}